\newtheorem{theorem}{Theorem}
\newtheorem{corollary}{Corollary}
\newtheorem{lemma}{Lemma}
\begin{document}
%
\title{Design and Analysis of Resilient Vehicular \\ Platoon Systems over Wireless Networks}
%
%
%
\vspace{-0.2cm}
\author{\normalsize Tingyu Shui and Walid Saad \\ 
Bradley Department of Electrical and Computer Engineering, Virginia Tech, Arlington, VA, 22203, USA.
\\ Emails:\{tygrady, walids\}@vt.edu
\vspace{-0.5cm}}

%
%



\maketitle

\begin{abstract}
Connected vehicular platoons provide a promising solution to improve traffic efficiency and ensure road safety. Vehicles in a platoon utilize on-board sensors and wireless vehicle-to-vehicle (V2V) links to share traffic information for cooperative adaptive cruise control. To process real-time control and alert information, there is a need to ensure clock synchronization among the platoon's vehicles. However, adversaries can jeopardize the operation of the platoon by attacking the local clocks of vehicles, leading to clock offsets with the platoon's reference clock. In this paper, a novel framework is proposed for analyzing the resilience of vehicular platoons that are connected using V2V links. In particular, a resilient design based on a diffusion protocol is proposed to re-synchronize the attacked vehicle through wireless V2V links thereby mitigating the impact of variance of the transmission delay during recovery. Then, a novel metric named temporal conditional mean exceedance is defined and analyzed in order to characterize the resilience of the platoon. Subsequently, the conditions pertaining to the V2V links and recovery time needed for a resilient design are derived. Numerical results show that the proposed resilient design is feasible in face of a nine-fold increase in the variance of transmission delay compared to a baseline designed for reliability. Moreover, the proposed approach improves the reliability, defined as the probability of meeting a desired clock offset error requirement, by $45\%$ compared to the baseline. 
\end{abstract}
\begin{IEEEkeywords}
Vehicle platoon, resilience, synchronization, temporal conditional mean exceedance
\end{IEEEkeywords}

%
\IEEEpeerreviewmaketitle

\vspace{-0.22cm}
\section{Introduction}
\vspace{-0.12cm}
Vehicular platoons are promising components of intelligent transportation systems (ITS), that can help improve traffic efficiency and road safety \cite{8667866}. By arranging independent vehicles into a platoon, vehicles are maintained with well-designed spacing and velocity, reducing fuel consumption due to less air resistance \cite{9699045}. By using communication link along the platoon, risk alerts can be transmitted to convey information on traffic issues, e.g., rear-end collision, platoon merging/splitting, and emergency braking \cite{7736181}. One critical challenge facing the deployment of connected vehicular platoon is the need for synchronization over all the vehicles to ensure a correct temporal ordering among different events \cite{9834918}.

Platoon synchronization requires all vehicles' local clocks to follow an identical platoon reference clock for processing real-time control and alert information. 
An inconsistency between a certain vehicle's local clock and platoon reference clock could have dire consequences such as inaccurate braking or acceleration \cite{1580935}. Although the vehicles of a platoon can be synchronized prior to deployment, cyber-attacks such as synchronization disruption \cite{s22176679} can still jeopardize the platoon's synchronization during its operation. Thus, one must develop a resilient platoon system that can recover from such attacks. An intuitive solution is to use information shared over wireless vehicle-to-vehicle (V2V) links. Since this information contains real-time timestamps, a compromised vehicle can directly use it to modify its clock. However, a significant drawback of this solution is that there exists a delay between the received timestamp and the actual time,
due to the stochastic wireless V2V transmission. Thus, an effective synchronization through V2V information sharing also requires the system to be resilient in face of the variance of the transmission delay.

To address this issue, a common approach is to design a reliable wireless V2V link that maintains a high probability of meeting a desired, low latency delay target. In this regard, recent works \cite{10213228,9046279,9741813,9013252} studied the low latency challenges of connected vehicular platoons. For example, the work in \cite{10213228} studied the problem of joint power control and spectrum allocation for an ultra-reliable platoon communication with a focus on designing a robust system under worst-case channel conditions. In \cite{9046279}, the authors modeled the uncertainty in platoon communication as a Markov decision process (MDP), and they used reinforcement learning to enable ultra reliable low latency V2V communications. Moreover, the work in \cite{9741813} considered a two time-scale resource allocation framework for ultra-reliable V2V communications, where both large-scale and instantaneous channel state information (CSI) are leveraged to reduce the occurrence of extreme events. Furthermore, the work in \cite{9013252} proposed a dependence control mechanism by modeling the correlation among co-existing V2V links delay and, then, used this correlation to optimize network reliability.

However, this prior art \cite{10213228,9046279,9741813,9013252} focuses on either a reliable or a robust design aiming to maintain a low transmission delay, but it ignores the system's response once the desired delay requirements are violated. In fact, since a vehicular platoon relies on real-time traffic information, the risks of delayed control or alert information will increase if the desired delay requirements are exceeded consecutively, over time. Moreover, reliable and robust designs aiming at a low outage probability will impose very strict requirements on the V2V links' signal-to-interference-plus-noise ratio (SINR), which limits the practicality of these two approaches. Therefore, instead of satisfying the V2V link delay requirement with high probability, there is a need to design a resilient system that can respond to unacceptable delays that violate the V2V link delay requirement and, then, recover from synchronization disruption attack.

The main contribution of this paper is a novel framework for designing resilient platoon systems that can recover from synchronization disruption attacks. Specifically, we propose a metric named temporal conditional mean exceedance (TCME) to characterize the resilience of the system. In particular, we adopt a diffusion protocol to mitigate the impact of the variance of the V2V link delay on the compromised vehicle's clock offset. We then define resilience as the platoon's ability to recover from synchronization disruption attacks and to prevent the clock offset from consecutively exceeding the desired threshold. To analyze the feasible region of our resilient design, we derive the cumulative distribution function (CDF) of the V2V link delay and an approximated expression of the TCME. \textit{To our best knowledge, this is the first work that defines the resilience of a connected vehicular platoon under synchronization cyber-attack and characterizes the conditions and time needed for recovery.} Simulation results show that the proposed resilient design achieves $45\%$ reliability gain and is feasible in face of a nine-fold increase in the variance of the transmission delay compared to a baseline reliable design that relies on outage probability.

The rest of the paper is organized as follows. Section II presents the system model. In Section III, we perform resilience analysis for the platoon. Section IV provides the simulation results, and conclusions are drawn in Section V.
\vspace{-0.10cm}
\section{System model}
\vspace{-0.15cm}
\subsection{Diffusion Strategy}
\vspace{-0.10cm}
We consider a single platoon within a two-dimensional space whose vehicles are organized into a predecessor-following (PF) model, in which each vehicle can receive information from its predecessor. However, synchronization disruption attacks can force vehicle $i$ to update its on-board software clock through the transmission of malicious message containing incorrect time information\cite{s22176679}. Hence, the local clock of vehicle $i$ will be incorrectly adjusted, causing a clock offset with the platoon reference clock, as shown in Fig. 1. To identify such an attack, self-check timestamps can be periodically transmitted together with cooperative adaptive cruise control (CACC) information along the platoon. When clock offset outliers are detected, vehicle $i$ will send a re-synchronization request to its predecessor $i-1$. It will also stop transmitting self-check timestamps to its follower $i+1$. Then, the re-synchronization process will start between $i-1$ and $i$ over wireless V2V links.

Due to the directional information flow in the platoon, i.e., from predecessor to follower, synchronization protocols involving bidirectional communication are difficult to implement. As such, we can use the so-called diffusion strategy \cite{1566581} in which multiple rounds of communications from $i-1$ to $i$ are utilized to mitigate the synchronization error stemmed from variance of V2V link transmission delay. Specifically, at the beginning of time slot $l$, vehicle $i-1$ transmits clock information to vehicle $i$ using a V2V link. Then, vehicle $i$ records the received clock information. At the beginning of time slot $l+1$, vehicle $i$ modifies its clock reading according to the following diffusion update rule:
\vspace{-3pt}
\begin{equation}
\label{Clock}
C_i^{l+1} = \theta C_i^{l} + (1-\theta) \left[ C^{l} - \left( \tau_{i-1,i}^{l} - \mu_i \right) \right] + T,
\end{equation}
where $C_i^{l+1}$ is the modified clock reading of vehicle $i$ at the beginning of time slot $l+1$, $C_i^{l}$ is the clock reading at the beginning of time slot $l$, $C^{l}$ is the clock reading transmitted from vehicle $i-1$ at the beginning of time slot $l$ (also the platoon's reference clock reading), $T$ is the fixed interval of a single time slot such that $C^{l+1} = C^{l} + T$, and $\theta \in (0, 1)$ is the diffusion factor adopted by vehicle $i$. Note that the V2V link transmission delay $\tau_{i-1, i}^{l}$ from $i-1$ to $i$ is captured in \eqref{Clock}, which will lead to a delay in the modified clock. Thus, vehicle $i$ will mitigate the impact of the transmission delay by compensating for its expectation $\mu_i  = \mathbb{E} \left[\tau_{i-1,i}^{l} \right] $ when modifying its local clock\footnote{ Note that the assumption that vehicle $i$ is aware of $\mu_i$ is reasonable, as we will show in Section III.}. Accordingly, the clock offset $\xi_i^{l+1} = C_{i}^{l+1} - C^{l+1}$ at time slot $l+1$ between vehicle $i$ and the platoon reference clock will be:
    \vspace{-2pt}
\begin{equation}
\label{Error}
\xi_i^{l+1} = \theta \xi_i^{l} - (1-\theta) u_{i-1,i}^{l} , l \in \mathbb{N}_{\geq 0},
\end{equation}
where $u_{i-1,i}^{l} = \tau_{i-1,i}^{l}  - \mu_i $ is the compensated delay. We assume that the initial clock offset $\xi_{0} = C_{i}^{0} - C^{0}$ follows a normal distribution, i.e., $\xi_{0} \sim \mathcal{N}(0,\sigma_0^2)$.

\subsection{Wireless Communication Model}
\begin{figure}[t]
\label{System Model}
	\centering
	\includegraphics[scale=0.38]{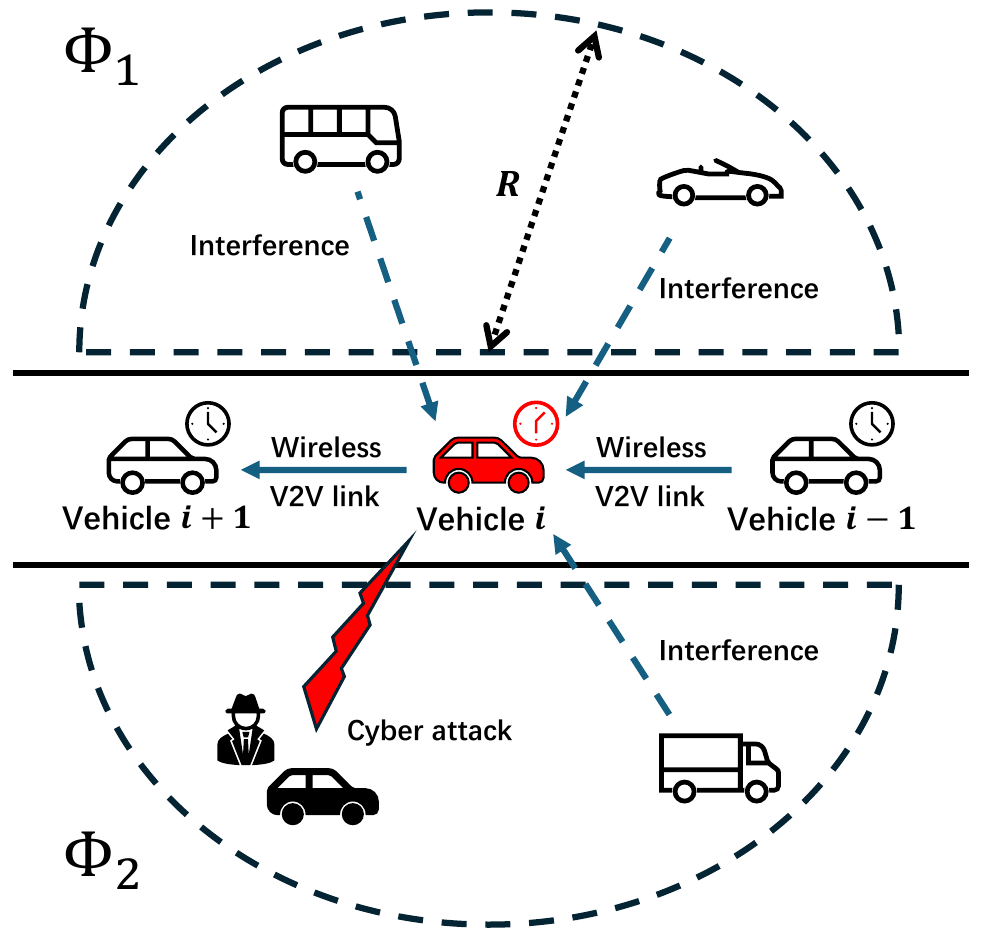}
    \vspace{-5pt}
	\caption{\small{A PF model where vehicles can only receive information from its predecessor. After cyber-attack, vehicle $i$ is working with clock offset to the platoon reference clock. }}
 \vspace{-0.5cm}
\end{figure}
Similar to \cite{8778746}, an orthogonal frequency-division multiple access (OFDMA) scheme is adopted within the platoon. 
Moreover, the considered system includes vehicles other than the platoon who will reuse the entire bandwidth. Those vehicles are distributed according to a Poisson point process (PPP) with intensity $\eta$ in two semicircle areas $\Phi_1$ and $\Phi_2$ of radius $R$ separated by the platoon and centered at vehicle $i$, as shown in Fig. 1. We consider a Nakagami channel for the V2V link between $i-1$ and $i$. Thus, at time slot $l$, the received power at $i$ will be defined as $P_{i-1,i}^{l} = P^t_{i-1} g_{i-1,i}^{l} d_{i-1,i}^{-\alpha}$ where $P^t_{i-1}$ represents the transmit power of vehicle $i-1$ , $g_{i-1,i}^{l}$ is the channel gain following a Gamma distribution with shape parameter $m$, $d_{i-1,i}$ is the distance between $i$ and $i-1$, and $\alpha$ is the path loss exponent. Because of the lack of continuous line-of-sight links between vehicle $i$ and vehicles outside the platoon, we consider Rayleigh channels for the interference links to $i$. The interference vehicle $i$ experiences at time slot $l$ can be modeled as $I_{i,\Phi_1}^{l} = \sum_{k \in \Phi_1} P^t_k g_{k,i}^{l} (d_{k,i}^l)^{^{-\alpha}}$ and $I_{i,\Phi_2}^{l} = \sum_{k \in \Phi_2} P^t_k g_{k,i}^{l} (d_{k,i}^l)^{^{-\alpha}}$ where $P^t_k$ is the transmit power of vehicle $k$, $g_{k,i}^{l}$ is the channel gain, and $d^l_{k,i}$ is the distance between $k$ and $i$. Thus, the SINR of the V2V link between vehicle $i$ and its predecessor $i-1$ will be:
\begin{equation}
\label{Interference}
    \gamma_{i-1,i}^{l}=
    \frac{ P_{i-1,i}^{l} }{ I_{i,\Phi_1}^{l} + I_{i,\Phi_2}^{l}  + N_0B },
    \vspace{-2pt}
\end{equation}
where $B$ is the allocated bandwidth of the wireless channel and $N_0$ is the noise power spectral density. The transmission delay over the V2V link at time slot $l$ is modeled as $\tau_{i-1,i}^{l} = \frac{D}{B \log( 1 + \gamma_{i-1,i}^{l}) }$ with $D$ being  the V2V link packet size.
\subsection{Risk of Collision Measurement}
To characterize the impact of the clock offset $\xi_i^{l}$, we introduce the time to collision (TTC) metric to capture the rear-end collision risk \cite{546270}. At any time $t$, the relative distance $X_i(t)$ between vehicle $i$ and $i-1$ can be written as: 
\begin{equation}
\label{distance}
\begin{aligned}
    X_i(t)= & X_i\left(t_0\right)+\left[V_{i-1}\left(t_0\right)-V_{i}\left(t_0\right)\right]\left(t-t_0\right) \\ 
    &+\int_{t_0}^t \int_{t_0}^u\left[a_{i-1}(s)-a_i(s)\right] \mathrm{d} s \mathrm{d} u,
\end{aligned}
    \vspace{-2pt}
\end{equation}
where $V_{i}$ and $V_{i-1}$ represent the velocities of vehicle $i$ and its predecessor $i-1$, $a_{i}$ and $a_{i-1}$ represent their accelerations, respectively, and $t_0$ is the time when the measurement of rear-end collision risk is required. To measure the rear-end collision risk, TTC is defined as the interval after which vehicle $i$ will collide with vehicle $i-1$, i.e., $T_i^c(t_0) = \inf \left\{t^* - t_0 \mid X_i(t^*) \leq 0 \right\}$. Accordingly, vehicle $i$ can potentially experience a rear-end collision if $T_i^c(t_0) < \hat{t}$ \cite{LI2020105676}, where $\hat{t}$ is a minimum interval for manual and autonomous control to avoid collision.

Here, we consider the TTC of vehicle $i$ during deceleration. Specifically, we assume that the attack happens at time slot $l = 0$ and the re-synchronization of vehicle $i$ begins immediately. After the diffusion strategy proceeds for a while, vehicle $i$ receives braking alert information from its predecessor in time slot $l_0$, and it begins to decelerate. We define the beginning of time slot $l_0$ as $t_0$, then, the TTC measured at $t_0$ is $T_i^c(t_0)$, at which the clock offset of vehicle $i$ is $\xi_i^{l_0}$. For tractability, along the lines of the model in \cite{546270}, we consider a simple case with fairly constant braking capabilities, i.e., $a_{i-1} = a_i = a \leq 0$. Moreover, we assume a constant and homogeneous speed and relative distance $V_{i-1}(t_0) = V_{i}(t_0) = V, X_i(t_0) = d_{i-1,i} = X$. The impact by clock offset $\xi_i^{l_0}$ is considered in the deceleration delay after vehicle $i$ receives alert information. Particularly, we assume a constant processing delay $t_d$ and clock offset $\xi_i^{l_0} < 0 $ \footnote{The case $\xi_i^{l_0} > 0$ is equivalent to $\xi_i^{l_0} < 0$ by considering the rear-end collision between vehicle $i$ and its follower $i+1$.}, then the overall deceleration delay is given as $\Delta t = t_d - \xi_i^{l_0}$. From a safety perspective, we consider the worst case $X = t_d V $. Specifically, collision will occur if there exists a non-zero clock offset with the platoon's reference clock. Thus, the relative distance in \eqref{distance} can be derived as:
\begin{equation}
\vspace{-0.1cm}
\label{casedistance}
\begin{aligned}
&X_i(t) = X_i\left(t_0\right)\\
+&
\begin{cases}
 \frac{a}{2} (t - t_0)^2, \ & t_0 \leq t < t_0 + \Delta t,\\
 a \Delta t (t - t_0) - \frac{a}{2} {\Delta t}^2, \ & t_0 + \Delta t  \leq t < t_m,\\
\begin{matrix}
a \Delta t (t_m - t_0) - \frac{a}{2} {\Delta t}^2 + \\
a \Delta t (t - t_m)  - \frac{a}{2}  (t - t_m)^2 \\
\end{matrix}, \ & t_m  \leq t \leq t_m + \Delta t,
\end{cases}
\end{aligned}
\vspace{-0.1cm}
\end{equation}
where $t_m = t_0 - \frac{V}{a}$ is the time vehicle $i-1$ stops. From \eqref{casedistance}, we can obtain $T_i^c(\xi_i^{t_0})$ as follows:
\begin{equation}
\vspace{-0.1cm}
\label{casettc}
\begin{aligned}
T_i^c(\xi_i^{l_0}) = 
\begin{cases}
t_d - \xi_i^{l_0} - \frac{V}{a} - \frac{\sqrt{2aV\xi_i^{l_0}}}{a}, & t_1 \leq \xi_i^{l_0}  < 0 , \\
\frac{t_d - \xi_i^{l_0}}{2} - \frac{X}{a (t_d - \xi_i^{l_0})}, & t_2 \leq  \xi_i^{l_0}   < t_1, \\
\sqrt{\frac{2X}{-a}}, & \xi_i^{l_0} \leq t_2,
\end{cases}
\end{aligned}
\vspace{-0.1cm}
\end{equation}
where $t_1 =  t_d + \frac{V - \sqrt{V^2 + 2aX }}{a}$ and $t_2 = t_d - \sqrt{\frac{2X}{-a}} $. Here, we rewrite $T_i^c(t_0)$ as $T_i^c(\xi_i^{l_0})$ since $t_0$ is the beginning of time slot $l_0$. Moreover, $T_i^c(\xi_i^{l_0})$ is an increasing function of $\xi_i^{l_0}$, and, thus, to maintain the TTC above a threshold $\hat{t}$, we have $T_i^c(\xi_i^{l_0}) \geq \hat{t} \Leftrightarrow -\hat{\epsilon} \leq \xi_i^{l_0} \leq 0 $. Since the risk of collision from both the follower and predecessor of vehicle $i$ should be considered, the rear-end collision can be avoided if $(\xi_i^{l_0})^2 \leq \hat{\epsilon}^2$. For notational simplicity, we replace $l_0$ with $l$ as $l_0$ could be an arbitrary time slot.

To ensure $(\xi_i^{l})^2 \leq \hat{\epsilon}^2$, a ``robust design" focuses on the worst-case conditions and a ``reliable design" aims at a low outage probability. However, those two approaches do not consider the response to violations of $(\xi_i^{l})^2 \leq \hat{\epsilon}^2$. When the wireless delay changes significantly, clock offset may exceed the threshold in two consecutive time slots. Consequently, the control signals or risk alerts will experience exacerbated delay, which increases the collision risk. Instead, one must look at a \emph{resilient design} of the platoon, whereby we utilize the temporal relation of $\xi_i^{l}$ in \eqref{Error} to prevent two consecutive violations of the clock offset requirement at time slot $l$ and $l+1$, and enable recovery from synchronization disruption attack.

\vspace{-0.10cm}
\section{Resilience Analysis}
\subsection{Clock Offset Analysis}
As shown in \eqref{Error}, $ \xi_i^{l+1}$ is composed of clock offset $ \xi_i^{l}$ and compensated delay $u_{i-1,i}^l$ in the last time slot $l$. Since the diffusion update rule \eqref{Clock} assumes that vehicle $i$ is aware of the expectation on transmission delay $\tau_{i-1,i}^{l}$ from its predecessor. Thus, we first derive the numerical expression of
$\mu_i = \mathbb{E} \left[\tau_{i-1, i}^{l} \right]$ and $\sigma_i^2 = \mathbb{V}\left[ \tau_{i-1, i}^{l} \right]$.
\begin{lemma}
The expectation and variance of the transmission delay $\tau_{i}^{l}$ of the V2V link between vehicle $i-1$ and vehicle $i$ is given by:
\vspace{-0.12cm}
\begin{equation}
\label{mu}
    \mu_i = \mathbb{E} \left[\tau_{i-1, i}^{l} \right] = \int_{0}^{\infty} \frac{D}{B \log_2(1 + \gamma)} f_{i}(\gamma) \mathrm{d} \gamma,
\end{equation}
\begin{equation}
\label{var}
    \sigma_i^2 = \mathbb{V}\left[ \tau_{i-1,i}^{l} \right] =  \int_{0}^{\infty} \left( \frac{D}{B \log_2(1 + \gamma)}\right)^2 f_{i}(\gamma) \mathrm{d} \gamma - \mu_i^2,
\end{equation}
where $f_{i}(\gamma) = \frac{\mathrm{d} F_{i}(\gamma)}{\mathrm{d}\gamma}$ and $ F_{i}(\gamma) = 1 - \sum_{k=1}^m (-1)^{k+1} {\binom{n}{m}} \operatorname{exp} \left( - k \eta \gamma \frac{d_{i-1,i}^{\alpha}}{P_{i-1} } B N_0 \right) \mathcal{L}^2_{I_{i,\Phi_1}^{l}} \left( k \eta \gamma \frac{d_{i-1,i}^{\alpha}}{P_{i-1} }  \right)$ with $\mathcal{L}_{I_{i,\Phi_1}^{l}} \left(s\right)$ being the Laplace transform of $I_{i,\Phi_1}^{l}$. 
\end{lemma}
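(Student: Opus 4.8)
The plan is to split the statement into a routine moment computation and the one substantive step, which is the distribution of the SINR. Since the transmission delay is a deterministic monotone function of the SINR, $\tau_{i-1,i}^{l}=D/(B\log_2(1+\gamma_{i-1,i}^{l}))$, the identities \eqref{mu} and \eqref{var} are just the law of the unconscious statistician: once the density $f_i(\gamma)$ of $\gamma_{i-1,i}^{l}$ is available, $\mu_i$ and $\mathbb{E}[(\tau_{i-1,i}^{l})^{2}]$ are the integrals of $D/(B\log_2(1+\gamma))$ and its square against $f_i$, and $\sigma_i^2$ is the second moment minus $\mu_i^2$. So the whole lemma reduces to establishing the stated form of $F_i(\gamma)=\mathbb{P}(\gamma_{i-1,i}^{l}\le\gamma)$ and then setting $f_i=\mathrm{d}F_i/\mathrm{d}\gamma$.

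To derive $F_i$, I would start from $F_i(\gamma)=1-\mathbb{P}(\gamma_{i-1,i}^{l}>\gamma)$ and substitute the SINR in \eqref{Interference}. With $I=I_{i,\Phi_1}^{l}+I_{i,\Phi_2}^{l}$, the event $\{\gamma_{i-1,i}^{l}>\gamma\}$ equals $\{g_{i-1,i}^{l}>\gamma d_{i-1,i}^{\alpha}(I+N_0B)/P_{i-1}^{t}\}$, i.e.\ the tail of the shape-$m$ Gamma variable $g_{i-1,i}^{l}$ evaluated at a random argument. The key move is to replace this Gamma tail by the exponential-mixture form of Alzer's inequality, $\mathbb{P}(g_{i-1,i}^{l}>x)\approx 1-(1-e^{-\eta x})^{m}$, and expand the power binomially, $1-(1-e^{-\eta x})^{m}=\sum_{k=1}^{m}\binom{m}{k}(-1)^{k+1}e^{-k\eta x}$, which turns the incomplete-Gamma expression into a finite sum of exponentials that will separate cleanly over the noise and the interference (here $\eta$ is the normalization constant of Alzer's approximation, and the binomial coefficient in the statement is $\binom{m}{k}$).

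Next, substituting $x=\gamma d_{i-1,i}^{\alpha}(I+N_0B)/P_{i-1}^{t}$, the deterministic noise part factors out as $\exp(-k\eta\gamma d_{i-1,i}^{\alpha}BN_0/P_{i-1}^{t})$, leaving the expectation $\mathbb{E}[\exp(-s(I_{i,\Phi_1}^{l}+I_{i,\Phi_2}^{l}))]$ with $s=k\eta\gamma d_{i-1,i}^{\alpha}/P_{i-1}^{t}$. Because $\Phi_1$ and $\Phi_2$ are disjoint regions of the same PPP with independent fading marks, $I_{i,\Phi_1}^{l}$ and $I_{i,\Phi_2}^{l}$ are independent, and by the symmetry of the two semicircles they are identically distributed, so the expectation factors as $\mathcal{L}_{I_{i,\Phi_1}^{l}}(s)\,\mathcal{L}_{I_{i,\Phi_2}^{l}}(s)=\mathcal{L}^{2}_{I_{i,\Phi_1}^{l}}(s)$; collecting terms yields exactly the claimed $F_i(\gamma)$, and differentiating gives $f_i$ for the LOTUS integrals. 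If an explicit $\mathcal{L}_{I_{i,\Phi_1}^{l}}$ is desired, it follows from the PPP Laplace functional with Rayleigh (exponential) marks and path loss $r^{-\alpha}$ integrated over the semicircle of radius $R$ in polar coordinates — standard, and I would not work it out.

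The main obstacle is the Gamma-tail step. For integer $m$ there is an exact finite sum, $\mathbb{P}(g_{i-1,i}^{l}>x)=e^{-x}\sum_{j=0}^{m-1}x^{j}/j!$, but it lacks the alternating-binomial structure, so averaging over $I$ would produce derivatives of $\mathcal{L}_{I_{i,\Phi_1}^{l}}$ and cross terms rather than the clean $\mathcal{L}^{2}_{I_{i,\Phi_1}^{l}}$; the compact closed form genuinely rests on Alzer's inequality, so the care needed is (i) to verify the resulting $F_i$ is still a legitimate (approximate) CDF — nondecreasing, with $F_i(0^{+})=0$ and $F_i(\infty)=1$ — and (ii) to control its error inside the moment integrals, in particular near $\gamma\to 0$, where $1/\log_2(1+\gamma)$ blows up and $\mu_i$ and $\sigma_i^2$ are finite only if the SINR density decays fast enough there, i.e.\ for sufficiently large $m$ or under a minimum-rate cutoff. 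A lesser point is that, with $R<\infty$, the interference Laplace functional does not reduce to the familiar infinite-plane closed form, which is why the statement leaves it as $\mathcal{L}_{I_{i,\Phi_1}^{l}}$.
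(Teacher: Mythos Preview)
The paper omits the proof of this lemma due to space limitations, so there is nothing to compare against directly. Your approach is the standard stochastic-geometry derivation and is correct in outline: LOTUS reduces \eqref{mu} and \eqref{var} to knowing the SINR density, and the SINR CDF follows from conditioning on the interference, replacing the Nakagami (Gamma) CCDF by Alzer's bound $1-(1-e^{-\eta x})^{m}$, expanding binomially, and then using the independence and symmetry of the two semicircle interferers to factor the expectation as $\mathcal{L}^{2}_{I_{i,\Phi_1}^{l}}(s)$. Your reading of the binomial coefficient as $\binom{m}{k}$ (the paper's $\binom{n}{m}$ is evidently a typo) and of the $\eta$ appearing in $F_i(\gamma)$ as the Alzer normalization constant rather than the PPP intensity are both correct; the paper overloads the symbol. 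Your caveats about the integrability of $1/\log_2(1+\gamma)$ near $\gamma\to 0$ and about the finite-$R$ Laplace functional not admitting the infinite-plane closed form are apt and go beyond what the paper states.
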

\begin{proof}
The proof was omitted due to space limitation.
\end{proof}
From Lemma 1, we can observe that $\mathbb{E} \left[ u_{i-1,i}^{l} \right] = 0$ and $\mathbb{V} \left[ u_{i-1,i}^{l} \right] = \sigma_i^2$. By replacing the term $\xi_i^{l}$ through recursive formulation, we can rewrite \eqref{Error} as follows:
\vspace{-0.2cm}
\begin{equation}
\label{recursive}
    \xi_i^{l+1} = \theta^{l+1} \xi_{i}^{0} -  (1-\theta) \sum_{k=0}^{l} \theta^k u_{i-1,i}^{l-k}, l \in \mathbb{N}_{\geq 0}.
    \vspace{-0.08cm}
\end{equation}
From \eqref{recursive}, the impact of the initial clock offset $\xi_i^0$ will be eliminated as the diffusion protocol proceeds. Eventually, the clock offset only contains the scaled compensated delay. However, it is difficult to ensure a high probability to maintain $(\xi_i^{l})^2 \leq \hat{\epsilon}^2$. First, the diffusion protocol needs multiple rounds to mitigate the impact of the initial clock offset $\xi_i^{0}$, as per the term $\theta^{l+1} \xi_{i}^{0}$. Thus, the requirement is particularly stringent for small $l$ at the beginning of recovery. Although we can adopt a small $\theta$ to mitigate the impact of $\xi_i^{0}$ rapidly, the variance of $(1-\theta) u_{i-1,i}^{l}$ will increase as shown in \eqref{Error}. In other words, if we have already maintained $(\xi_i^{l})^2 \leq \hat{\epsilon}^2$ at time slot $l$, the probability of violating this requirement in the next time slot $l+1$ will increase when $\theta$ is small. 
\addtolength{\topmargin}{0.1in}
Due to the difficulty in maintaining $(\xi_i^{l})^2 \leq \hat{\epsilon}^2$ with high probability, next, we consider a resilience-focused approach that prevents violations of the clock offset requirement in two consecutive time slot. 
 



\subsection{Temporal Conditional Mean Exceedance}
To illustrate the resilience of our system and investigate factors impacting resilience, first, we propose a new metric inspired by conditional mean exceedance (CME) \cite{bryson1974heavy}. Specifically, we define the TCME metric as follows:
\vspace{-0.12cm}
\begin{equation}
\label{TCME}
    h^l(\epsilon) = \mathbb{E} \left\{ (\xi_i^{l+1})^2 - \epsilon \mid (\xi_i^{l})^2 > \epsilon  \right\},
    \vspace{-0.08cm}
\end{equation}
where $\xi_i^{l+1}$ and $\xi_i^{l}$ are temporally correlated according to \eqref{Error}. To instill resilience into the system, if the clock offset of vehicle $i$ violates the requirement $(\xi_i^{l})^2 \leq \hat{\epsilon}^2$ at time slot $l$, then, the system must avoid $(\xi_i^{l+1})^2 \geq \hat{\epsilon}^2$ in the next time slot $l + 1$ to reduce the risk of collision. Thus, in our system, resilience is equivalent to guaranteeing $h^l(\hat{\epsilon}^2) < 0$. This condition means that, the clock offset will respond to the violation on $(\xi_i^{l})^2 \leq \hat{\epsilon}^2$ and recover from it after one round of diffusion, i.e., $\mathbb{E} \left\{ (\xi_i^{l+1})^2 \mid (\xi_i^{l})^2 > \hat{\epsilon}^2  \right\} \leq \hat{\epsilon}^2$. Note that, following the attack at time slot $0$, the diffusion protocol will proceed for a certain interval to guarantee $h^l(\hat{\epsilon}^2) < 0$ at time slot $l$. However, prior to time slot $l$, the platoon will still experience a high risk of collision. Thus, $l$ can capture the recovery time. As such, a small $l$ is desired in order to reduce the interval of high risk and enhance the resilience of the platoon. We can now ask two fundamental questions related to resilience: 1) What is the condition on the variance of V2V link delay that is needed to satisfy $h^l(\hat{\epsilon}^2) < 0$? 2) What is the minimum $l$ for satisfying $h^l(\hat{\epsilon}^2) < 0$? 
\addtolength{\topmargin}{-0.08in} 

To answer these two questions, we must obtain the closed-form distribution of $\xi_i^{l} = \theta^{l} \xi_{i}^{0} -  (1-\theta) \sum_{k=0}^{l-1} \theta^k u_{i-1,i}^{l-1-k}$, which, however, is challenging because of the expressions of $f_i(\gamma)$ and $F_i(\gamma)$ given in Lemma 1. Thus, for tractability, we approximate $Y_l = \sum_{k=0}^{l-1} \theta^k u_{i-1,i}^{l-1-k}$ to a normal distribution. Such approximation is reasonable as we can derive $\sup _{y \in \mathbb{R}}\left|G_l(y)-\Phi_l(y)\right| \leq C \cdot \frac{\sqrt{1-\theta^2}^{3}}{1-\theta^3} \cdot \frac{1-\theta^{3l}}{\sqrt{1-\theta^{2l}}^3}$ by using the Berry–Esseen theorem \cite{berry1941accuracy}, which shows a bounded error of the approximation. Specifically, $\Phi_l(y)$ is the CDF of $Y_n \sim \mathcal{N}(0,\sigma_l^2)$ with $\sigma_l^2 = \theta^{2l} \sigma_0^2 + \frac{1-\theta}{1+\theta} (1-\theta^{2l}) \sigma_i^2$, $G_l(y)$ is the CDF of $Y_l$ and $C$ is a constant. Since the bound is a decreasing function of $\theta$ and converges to $0$ when $\theta$ converges to $1$, the error of the approximation will be small if we select a large $\theta$. Next, we derive the expression of $h^l(\epsilon)$ based on the approximation $\xi_i^{l} \sim \mathcal{N}(0,\sigma_l^2)$.
\begin{lemma}
\vspace{-0.1cm}
Given that $\xi_i^{l} \sim \mathcal{N}(0,\sigma_l^2)$, at time slot $l$, we have:
    \vspace{-0.2cm}
\begin{equation}
\label{lemma1}
    h^l(\epsilon) = \theta^2 \sigma_l^2 +  \theta^2 \sqrt{\frac{2}{\pi}}\sigma_l \frac{ \sqrt{\epsilon} \operatorname{exp}  (-\frac{\epsilon}{2\sigma_l^2})  }{1-\operatorname{erf}(\frac{\sqrt{\epsilon}}{\sqrt{2}\sigma_l})} + (1-\theta)^2 \sigma_i^2 - \epsilon,
    \vspace{-0.12cm}
\end{equation}
where $\operatorname{erf}(x) =\frac{2}{\sqrt{\pi}} \int_0^z e^{-t^2} \mathrm{d} t$ is the Gaussian error function.
\end{lemma}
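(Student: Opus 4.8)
The plan is to peel off one step of the recursion \eqref{Error}, reduce the TCME to a truncated second moment of a centred Gaussian, and then evaluate that moment in closed form. First I would square the one-step update \eqref{Error}, obtaining
\[
(\xi_i^{l+1})^2 = \theta^2 (\xi_i^{l})^2 - 2\theta(1-\theta)\,\xi_i^{l}\,u_{i-1,i}^{l} + (1-\theta)^2 (u_{i-1,i}^{l})^2 ,
\]
and take the conditional expectation given the event $A=\{(\xi_i^{l})^2>\epsilon\}$. Reading off from the recursive form \eqref{recursive} that $\xi_i^{l}$ is a deterministic function of the past compensated delays $u_{i-1,i}^{0},\dots,u_{i-1,i}^{l-1}$ only — and assuming the per-slot compensated delays are mutually independent — $\xi_i^{l}$, and hence the event $A$, is independent of $u_{i-1,i}^{l}$. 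Consequently the cross term vanishes since $\mathbb{E}[u_{i-1,i}^{l}]=0$, and the quadratic term contributes $(1-\theta)^2\sigma_i^2$ since $\mathbb{V}[u_{i-1,i}^{l}]=\sigma_i^2$ (both facts established from Lemma 1). This gives the reduction
\[
h^l(\epsilon) = \theta^2\,\mathbb{E}\!\left[(\xi_i^{l})^2 \mid (\xi_i^{l})^2>\epsilon\right] + (1-\theta)^2\sigma_i^2 - \epsilon ,
\]
so everything hinges on the conditional second moment of a centred Gaussian on a symmetric tail event.

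Next I would evaluate that moment using $\xi_i^{l}\sim\mathcal{N}(0,\sigma_l^2)$. Standardizing via $Z=\xi_i^{l}/\sigma_l\sim\mathcal{N}(0,1)$ and $a=\sqrt{\epsilon}/\sigma_l$, the conditioning event becomes $\{|Z|>a\}$, with $\mathbb{P}(|Z|>a)=1-\operatorname{erf}\!\bigl(a/\sqrt{2}\bigr)$ and $\mathbb{E}[(\xi_i^{l})^2\mid (\xi_i^{l})^2>\epsilon]=\sigma_l^2\,\mathbb{E}[Z^2\mid |Z|>a]$. For the truncated moment I would integrate by parts using the identity $z^2\phi(z)=\phi(z)+\frac{d}{dz}\bigl(-z\phi(z)\bigr)$ for the standard normal density $\phi$, which yields $\int_a^{\infty} z^2\phi(z)\,\mathrm{d}z=\tfrac12\bigl(1-\operatorname{erf}(a/\sqrt{2})\bigr)+a\phi(a)$ and hence, by symmetry, $\mathbb{E}[Z^2\mid |Z|>a]=1+2a\phi(a)/\bigl(1-\operatorname{erf}(a/\sqrt{2})\bigr)$. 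Substituting $\phi(a)=e^{-a^2/2}/\sqrt{2\pi}$ and then $a=\sqrt{\epsilon}/\sigma_l$ — so the coefficient $2/\sqrt{2\pi}$ becomes $\sqrt{2/\pi}$ — one obtains
\[
\mathbb{E}\!\left[(\xi_i^{l})^2 \mid (\xi_i^{l})^2>\epsilon\right] = \sigma_l^2 + \sqrt{\tfrac{2}{\pi}}\,\sigma_l\,\frac{\sqrt{\epsilon}\,\exp\!\bigl(-\tfrac{\epsilon}{2\sigma_l^2}\bigr)}{1-\operatorname{erf}\!\bigl(\tfrac{\sqrt{\epsilon}}{\sqrt{2}\,\sigma_l}\bigr)} ,
\]
and inserting this into the reduction above gives exactly \eqref{lemma1}.

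I expect the main obstacle to be the reduction step rather than the Gaussian calculus: one must argue carefully that $\xi_i^{l}$ and $u_{i-1,i}^{l}$ are independent, so that conditioning on a tail event of $\xi_i^{l}$ does not perturb the law of $u_{i-1,i}^{l}$ — this is precisely what kills the cross term and forces the quadratic term to contribute exactly $\sigma_i^2$ (rather than some conditional second moment of the delay). I would therefore state the mutual-independence assumption on the sequence $\{u_{i-1,i}^{l}\}_{l\in\mathbb{N}_{\geq 0}}$ explicitly at the start, noting that \eqref{recursive} exhibits $\xi_i^{l}$ as depending only on slots $0,\dots,l-1$. The remaining steps — the standardization and the integration-by-parts identity for $z^2\phi(z)$, together with rewriting tail probabilities and densities in terms of $\operatorname{erf}$ — are routine.
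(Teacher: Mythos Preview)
Your proposal is correct and follows essentially the same approach as the paper: reduce $h^l(\epsilon)$ to $\theta^2\,\mathbb{E}[(\xi_i^{l})^2\mid(\xi_i^{l})^2>\epsilon]+(1-\theta)^2\sigma_i^2-\epsilon$ via \eqref{Error}, then evaluate the Gaussian truncated second moment to obtain \eqref{lemma1}. If anything, your version is more careful than the paper's, since you make explicit the independence of $\xi_i^{l}$ and $u_{i-1,i}^{l}$ (from \eqref{recursive}) that justifies dropping the cross term and replacing the delay's conditional second moment by $\sigma_i^2$ --- the paper simply asserts the reduced form.
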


\begin{proof}
Based on \eqref{TCME}, we have $h^l(\epsilon) = \mathbb{E} \left\{ (\xi_i^{l+1})^2 - \epsilon \mid (\xi_i^{l})^2 > \epsilon  \right\} = \theta^2 \mathbb{E} \left\{ (\xi_i^{l})^2 \mid (\xi_i^{l})^2 > \epsilon  \right\} + (1-\theta)^2\sigma_i^2 - \epsilon$ with $\xi_i^{l} \sim \mathcal{N}(0,\sigma_l^2)$. Then, we derive the conditional expectation $\mathbb{E} \left\{ (\xi_i^{l})^2 \mid (\xi_i^{l})^2 > \epsilon  \right\}$ as 
    \vspace{-0.15cm}
\begin{equation}
    \mathbb{E} \left\{ (\xi_i^{l})^2 \mid (\xi_i^{l})^2 > \epsilon  \right\} 
    =   \frac{\frac{2}{\sqrt{2\pi}\sigma_l}\int_{\sqrt{\epsilon}}^{\infty} x^2 \operatorname{exp}(-\frac{x^2}{2\sigma_t^2}) \mathrm{d}x}{ \frac{2}{\sqrt{2\pi}\sigma_l} \int_{\sqrt{\epsilon}}^{\infty} \operatorname{exp}(-\frac{x^2}{2\sigma_l^2}) \mathrm{d}x},
    \vspace{-0.07cm}
\end{equation}
where we have $\frac{2}{\sqrt{2\pi}\sigma_l}\int_{\sqrt{\epsilon}}^{\infty} \operatorname{exp}(-\frac{x^2}{2\sigma_l^2}) \mathrm{d}x = 1 - \operatorname{erf}(\frac{\sqrt{\epsilon}}{\sqrt{2}\sigma_l})$ by definition. We can further derive the numerator as $\frac{2}{\sqrt{2\pi}\sigma_l}\int_{\sqrt{\epsilon}}^{\infty} x^2 \operatorname{exp}(-\frac{x^2}{2\sigma_l^2}) \mathrm{d}x = \sigma_l^2 \left[1 - \operatorname{erf}( \frac{ \sqrt{\epsilon} }{ \sqrt{2} \sigma_l } )\right] + \sqrt{\frac{2}{\pi}} \sigma_l \sqrt{\epsilon} \operatorname{exp} ( - \frac{\epsilon}{2\sigma_l^2})$.
By bringing both the numerator and denominator, we obtain the result in \eqref{lemma1}.
\end{proof}
\vspace{-0.2cm}
Using the result of Lemma 2, we can derive key conditions on $\sigma_i^2$ to satisfy $h^l(\hat{\epsilon}^2) < 0$,
\begin{theorem}
Given a clock offset error threshold $\hat{\epsilon}$ and a desired time slot $l$, $h^l(\hat{\epsilon}^2) < 0$ can be satisfied in the platoon only when there exists $\theta \in (0,1)$ satisfying
    \vspace{-0.14cm}
\begin{equation}
\label{feasibility}
    \hat{\epsilon}^2 - (1-\theta)^2\sigma_i^2 \geq \frac{\theta^2\left( \hat{\epsilon} + \sqrt{\hat{\epsilon}+ 4\sigma_l^2 } \right)^2}{4}.
    \vspace{-0.1cm}
\end{equation}
\end{theorem}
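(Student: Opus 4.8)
The plan is to start from the exact identity of Lemma~2 and reduce the whole statement to a single scalar comparison. Put $\epsilon=\hat\epsilon^2$ and $a=\hat\epsilon/\sigma_l\ge 0$, and observe that the conditional-exceedance factor is the standard-normal hazard (inverse Mills) ratio
\begin{equation*}
\lambda(a)=\sqrt{\tfrac{2}{\pi}}\,\frac{e^{-a^2/2}}{1-\operatorname{erf}(a/\sqrt{2})},
\end{equation*}
so that $\mathbb{E}\{(\xi_i^{l})^2\mid(\xi_i^{l})^2>\hat\epsilon^2\}=\sigma_l^2+\sigma_l\hat\epsilon\,\lambda(a)=\sigma_l^2\bigl(1+a\,\lambda(a)\bigr)$. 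With this substitution the resilience condition $h^l(\hat\epsilon^2)<0$ is \emph{exactly}
\begin{equation*}
\hat\epsilon^2-(1-\theta)^2\sigma_i^2>\theta^2\sigma_l^2\bigl(1+a\,\lambda(a)\bigr).
\end{equation*}

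I would then recast the right-hand side of \eqref{feasibility} into the same form. Writing $g(a)=\tfrac12\bigl(a+\sqrt{a^2+4}\bigr)$, the radical yields the identity $g(a)\bigl(g(a)-a\bigr)=1$, i.e.\ $g(a)^2=1+a\,g(a)$, whence $\tfrac14\theta^2\bigl(\hat\epsilon+\sqrt{\hat\epsilon^2+4\sigma_l^2}\bigr)^2=\theta^2\sigma_l^2\,g(a)^2=\theta^2\sigma_l^2\bigl(1+a\,g(a)\bigr)$. Thus \eqref{feasibility} and $h^l(\hat\epsilon^2)<0$ are of the identical shape and differ only in whether $g(a)$ or $\lambda(a)$ appears, so the entire theorem collapses to a one-dimensional comparison of $\lambda(a)$ with $g(a)$. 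The natural tool is the hazard ODE $\lambda'(a)=\lambda(a)\bigl(\lambda(a)-a\bigr)$, which, read against $g(a)\bigl(g(a)-a\bigr)=1$, converts ``$\lambda$ versus $g$'' into ``$\lambda'(a)$ versus $1$.''

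For the ``only when'' (necessity) direction I would fix any $\theta^\star\in(0,1)$ for which $h^l(\hat\epsilon^2)<0$ holds and aim to show that this same $\theta^\star$ satisfies \eqref{feasibility}; existence of a feasible $\theta$ then follows at once. By the reduction this requires the hazard inequality in the form $\sigma_l^2\bigl(1+a\,\lambda(a)\bigr)\ge\sigma_l^2\,g(a)^2$ at $a=\hat\epsilon/\sigma_l$, after which $\hat\epsilon^2-(1-\theta^\star)^2\sigma_i^2>(\theta^\star)^2\sigma_l^2(1+a\lambda(a))\ge(\theta^\star)^2\sigma_l^2 g(a)^2$ is precisely \eqref{feasibility}. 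Because $\sigma_l^2=\theta^{2l}\sigma_0^2+\tfrac{1-\theta}{1+\theta}(1-\theta^{2l})\sigma_i^2$ is itself a function of $\theta$, I would also verify that this pointwise implication transfers cleanly to the existence statement.

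The hard part is the \emph{direction} of the hazard comparison. The closed form $g(a)=\tfrac12(a+\sqrt{a^2+4})$ is exactly the classical bound that controls $\lambda(a)$ from \emph{above} (equivalently $\lambda'(a)<1$, so $\lambda(a)\le g(a)$), and that companion inequality delivers the \emph{sufficiency} half $\eqref{feasibility}\Rightarrow h^l(\hat\epsilon^2)\le 0$ immediately. Elementary lower bounds such as $\lambda(a)\ge a$ do produce \emph{a} necessary condition, but a weaker one that is not the stated closed form. Obtaining the reverse control needed to land exactly on \eqref{feasibility}, and reconciling it with the $\theta$-coupling through $\sigma_l^2$, is therefore the step I expect to be the main obstacle and the one I would scrutinize most carefully.
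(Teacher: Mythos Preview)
Your reduction is correct and is in fact the paper's own argument in cleaner clothing: the ``tight upper bound'' the paper invokes in \eqref{bound} is exactly the Birnbaum--Sampford hazard inequality $\lambda(a)\le g(a)=\tfrac12\bigl(a+\sqrt{a^2+4}\bigr)$ rewritten in the original variables, and the algebraic manipulation from \eqref{bound} to \eqref{sigmat} to \eqref{feasibility} is the same identity $g(a)^2=1+a\,g(a)$ you isolated. So for the sufficiency direction your sketch and the paper coincide.

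Where you go further than the paper is the direction issue, and your diagnosis is right. The paper's proof concludes with ``the feasibility of $h^l(\hat\epsilon^2)<0$ can be established \emph{if} \eqref{feasibility},'' i.e.\ it proves \eqref{feasibility}$\Rightarrow h^l(\hat\epsilon^2)\le 0$; it does not prove the ``only when'' (necessity) direction stated in the theorem. Your observation that necessity at the same $\theta$ would require $\lambda(a)\ge g(a)$, which is false (already $\lambda(0)=\sqrt{2/\pi}<1=g(0)$), is correct; the obstacle you flagged is genuine and the paper simply does not address it. Read the theorem as a sufficient condition, in which case your argument already matches the paper's. (Incidentally, your silent correction of the radicand to $\hat\epsilon^2+4\sigma_l^2$ is also right; the $\hat\epsilon+4\sigma_l^2$ in the displayed statement is a typo.)
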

\vspace{-0.2cm}
\begin{proof}
To satisfy $h^l(\hat{\epsilon}^2) < 0$, we first find a tight upper bound \cite{itō1974diffusion} of $h^l(\epsilon)$ as following
\vspace{-0.25cm}
\begin{equation}
\label{bound}
    \begin{aligned}
        h^l(\epsilon) 
        & \leq \theta^2 \left( \sigma_l^2 + \frac{\epsilon}{2} + \sqrt{\frac{\epsilon^2}{4} + \epsilon \sigma_l^2 } \right) + (1 - \theta)^2 \sigma_i^2 - \epsilon. 
    \end{aligned}
\end{equation}
Consider feasibility of $ \theta^2 \left( \sigma_l^2 + \frac{\hat{\epsilon}^2}{2} + \hat{\epsilon} \sqrt{\frac{\hat{\epsilon}^2}{4} + \sigma_l^2 } \right) + (1 - \theta)^2 \sigma_i^2 - \hat{\epsilon}^2 \leq 0$, under condition $\hat{\epsilon}^2 - (1-\theta)^2\sigma_i^2 \geq \frac{\hat{\epsilon}^2}{4} \theta^2 $, we further transform it as
\begin{equation}
\label{sigmat}
    \sigma_l^2 \leq \left( \frac{\sqrt{\hat{\epsilon}^2 -(1-\theta)^2 \sigma_i^2 }}{\theta} -\frac{\hat{\epsilon}}{2} \right)^2 -\frac{\hat{\epsilon}^2}{4}.
\end{equation}
According to \eqref{sigmat}, the feasibility of $h^l(\hat{\epsilon}^2) < 0$ can be established if $\hat{\epsilon}^2 - (1-\theta)^2\sigma_i^2 \geq \frac{\theta^2\left( \hat{\epsilon} + \sqrt{\hat{\epsilon}+ 4\sigma_l^2 } \right)^2}{4}$.
\end{proof}
Theorem 1 provides the conditions on the variance of the transmission delay $\sigma_i^2$ to satisfy $h^l(\hat{\epsilon}^2) < 0$ with given $\hat{\epsilon}$ and $l$. We next derive the lower bound of $l$ from Theorem 1:
\begin{corollary}
Consider a given threshold $\hat{\epsilon}$, the time slot $l$ that satisfyies $h^l(\hat{\epsilon}^2) < 0$ is bounded as 
\begin{equation}
\label{hatt}
l \geq \frac{1}{2\ln{\theta}}\ln{ \left[ \frac{\hat{\epsilon}^2 - \theta\hat{\epsilon}\sqrt{\hat{\epsilon}^2-(1-\theta)^2\sigma_i^2} - \frac{1-\theta}{1+\theta} \sigma_i^2}{\theta^2(\sigma_0^2 - \frac{1-\theta}{1+\theta} \sigma_i^2)}    \right] },
\vspace{-2pt}
\end{equation}
where $\theta$ satisfies $\hat{\epsilon}^2 - (1-\theta)^2\sigma_i^2 \geq \frac{\theta^2\left( \hat{\epsilon} + \sqrt{\hat{\epsilon}+ 4\frac{1-\theta}{1+\theta} \sigma_i^2} \right)^2}{4}$.
\end{corollary}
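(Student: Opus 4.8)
The plan is to start from the feasibility condition established in Theorem~1, namely
$\sigma_l^2 \leq \left( \frac{\sqrt{\hat{\epsilon}^2 -(1-\theta)^2 \sigma_i^2 }}{\theta} -\frac{\hat{\epsilon}}{2} \right)^2 -\frac{\hat{\epsilon}^2}{4}$,
and substitute the explicit form of $\sigma_l^2$ given just before Lemma~2, i.e. $\sigma_l^2 = \theta^{2l}\sigma_0^2 + \frac{1-\theta}{1+\theta}(1-\theta^{2l})\sigma_i^2$. The key observation is that this is an affine function of the single quantity $\theta^{2l}$: writing $\sigma_l^2 = \theta^{2l}\left(\sigma_0^2 - \frac{1-\theta}{1+\theta}\sigma_i^2\right) + \frac{1-\theta}{1+\theta}\sigma_i^2$, the feasibility inequality becomes a linear inequality in $\theta^{2l}$, which I can solve directly for $\theta^{2l}$.

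The steps, in order, are: (i) expand the right-hand side of the Theorem~1 bound, $\left( \frac{\sqrt{\hat{\epsilon}^2 -(1-\theta)^2 \sigma_i^2 }}{\theta} -\frac{\hat{\epsilon}}{2} \right)^2 -\frac{\hat{\epsilon}^2}{4} = \frac{\hat{\epsilon}^2-(1-\theta)^2\sigma_i^2}{\theta^2} - \frac{\hat{\epsilon}}{\theta}\sqrt{\hat{\epsilon}^2-(1-\theta)^2\sigma_i^2}$, noting the two $\frac{\hat{\epsilon}^2}{4}$ terms cancel; (ii) substitute the affine-in-$\theta^{2l}$ expression for $\sigma_l^2$ on the left; (iii) isolate $\theta^{2l}$, obtaining $\theta^{2l}\left(\sigma_0^2 - \frac{1-\theta}{1+\theta}\sigma_i^2\right) \leq \frac{\hat{\epsilon}^2-(1-\theta)^2\sigma_i^2}{\theta^2} - \frac{\hat{\epsilon}}{\theta}\sqrt{\hat{\epsilon}^2-(1-\theta)^2\sigma_i^2} - \frac{1-\theta}{1+\theta}\sigma_i^2$, then multiply through by $\theta^2$ so the right side matches the bracketed numerator in \eqref{hatt}; (iv) divide by $\theta^2\left(\sigma_0^2 - \frac{1-\theta}{1+\theta}\sigma_i^2\right)$, which I expect to be positive (the initial-offset variance dominates the steady-state diffusion variance — this should be argued or assumed); (v) take $\ln(\cdot)$ of both sides and divide by $2\ln\theta$, which is negative for $\theta\in(0,1)$, flipping the inequality to yield the stated lower bound on $l$; and (vi) record that the admissible $\theta$ must itself satisfy the Theorem~1 feasibility condition with $\sigma_l^2$ replaced by its limiting value $\frac{1-\theta}{1+\theta}\sigma_i^2$ (the largest $\sigma_l^2$ can be made small, equivalently the binding constraint as $l\to\infty$), giving the side condition $\hat{\epsilon}^2 - (1-\theta)^2\sigma_i^2 \geq \frac{\theta^2\left( \hat{\epsilon} + \sqrt{\hat{\epsilon}+ 4\frac{1-\theta}{1+\theta} \sigma_i^2} \right)^2}{4}$.

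The main obstacle is bookkeeping around signs and the direction of inequalities: both the division by $2\ln\theta<0$ in step (v) and the need for $\sigma_0^2 - \frac{1-\theta}{1+\theta}\sigma_i^2 > 0$ in step (iv) must be handled carefully, and one must check that the argument of the logarithm is strictly positive (which is exactly what the side condition on $\theta$ guarantees, since it forces $\sigma_l^2$ — and hence a fortiori its $l\to\infty$ limit — to fit under the Theorem~1 ceiling). A secondary subtlety is that \eqref{hatt} uses $\sqrt{\hat{\epsilon}^2-(1-\theta)^2\sigma_i^2}$ rather than re-expanding it, so I should keep that radical intact throughout rather than squaring, to land on the stated form verbatim. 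Everything else is routine algebraic rearrangement of the already-established Theorem~1 inequality.
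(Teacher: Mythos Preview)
Your proposal is correct and matches the paper's intended derivation: the corollary is stated without explicit proof, but it is clearly meant to follow from inequality \eqref{sigmat} in the proof of Theorem~1 by substituting $\sigma_l^2 = \theta^{2l}\sigma_0^2 + \frac{1-\theta}{1+\theta}(1-\theta^{2l})\sigma_i^2$ and solving for $l$, exactly as you outline. One small algebraic step you gloss over in step~(iii)--(iv): after multiplying by $\theta^2$, the right-hand side contains $-(1-\theta)^2\sigma_i^2 - \theta^2\frac{1-\theta}{1+\theta}\sigma_i^2$, which collapses to $-\frac{1-\theta}{1+\theta}\sigma_i^2$ via the identity $(1-\theta)^2 + \theta^2\frac{1-\theta}{1+\theta} = \frac{1-\theta}{1+\theta}$; make this explicit so the numerator matches \eqref{hatt} exactly.
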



From Theorem 1 and Corollary 1, we observe that both the variance of the transmission delay and the diffusion factor have a significant impact on the conditions and recovery time in our resilient design. First, given a desired threshold $\hat{\epsilon}$, \eqref{feasibility} shows that the V2V links must be designed in a way that there exists a feasible $\theta \in (0,1)$ to satisfy $h^l(\hat{\epsilon}^2) < 0$. Furthermore, the lower bound given in Corollary 1 actually captures the time interval during which the rear-end collision risk in the platoon is high after the cyber-attack. Thus, an optimal $\theta$ should be adopted to minimize the lower bound with given $\sigma_i^2$. Overall, Theorem 1 and Corollary 1 provide guidelines on how to adapt the V2V link and diffusion strategy, so as to facilitate a resilient system capable of rapidly mitigating rear-end collision risk caused by cyber-attack.

\section{Simulation Results and Analysis}
\begin{table}[t]
\scriptsize
\centering
\caption{Simulation Parameters}
\vspace{-4pt}
\label{}
    \begin{tabular}{|c|c|c|}
    \hline
    \textbf{Parameter}    & \textbf{Description}     & \textbf{Parameter}    \\ \hline
    $R$ & Semi circle radius & $20 \ \mathrm{m}$  \\ \hline
    $a$ & Breaking Acceleration           & $-6 \ \mathrm{m/s^2}$  \\ \hline
    $v$ & Platoon velocity           & $25 \ \mathrm{m/s}$  \\ \hline
    $d_{i-1,i}$ & Platoon headway           & $10 \ \mathrm{m}$  \\ \hline
    $t_d$ & Breaking process delay      & $0.4 \ \mathrm{s}$ \cite{1580935}  \\ \hline
    $m$ & Nakagami parameter           & $3$  \\ \hline
    $\alpha$ & Path  loss exponent     & $3.5$  \\ \hline
    $P$ & Transmission power           & $27$ dBm  \\ \hline
    $N_0$ & Noise power spectrum density          & $-174$ dBm/Hz  \\ \hline
    $B$ & Bandwidth           & $20$ MHz  \\ \hline
    $D$ & Packet size           & $3,200$ bits \cite{8778746} \\ \hline
    $\sigma_0^2$ & Initial clock offset variance & $9 \ \mathrm{s^2}$ \\ \hline
    \end{tabular}
    \vspace{-0.4cm}
\end{table}

For our simulation, we first validate the approximation of $\xi_i^{l} \sim \mathcal{N}(0,\sigma_l^2)$ and illustrate the feasible region of the proposed resilient design. Then, we present system performance of our resilient design and a traditional reliable design in dense traffic conditions with limited communication resources. Unless stated otherwise, simulation parameters are summarized in Table I and statistical results are averaged over a large number of independent runs.

\begin{figure}[t]
	\centering
	\includegraphics[scale=0.52]{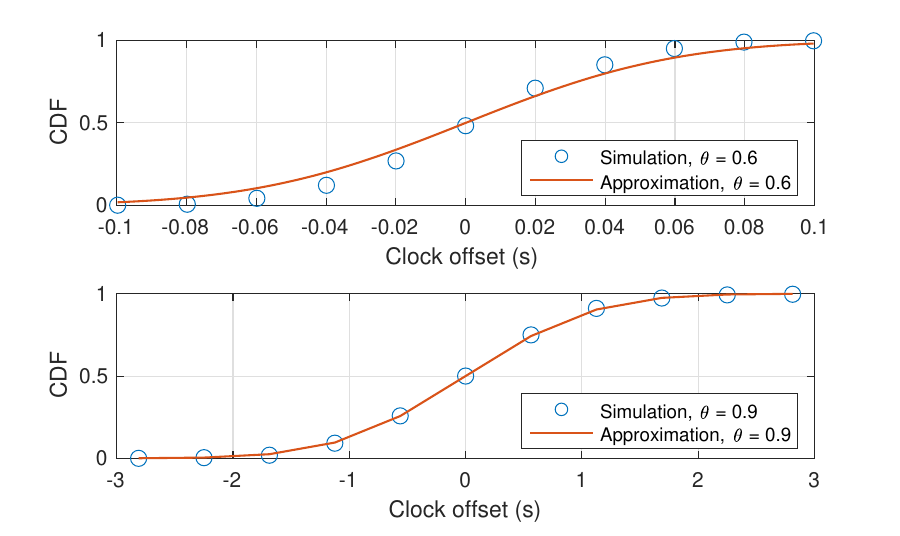}
 \vspace{-0.15cm}
	\caption{\small{Validation of approximation of $\xi_i^l$ into normal distribution.}}
     \vspace{-0.5cm}
    \label{approx}
\end{figure}

Fig. \ref{approx} shows the CDF of clock offset $\xi_i^{l}$ at round $l = 10$, compared to the CDF of normal distribution $\mathcal{N}(0,\sigma_l^2)$ with $\sigma_l^2 = \theta^{2l} \sigma_0^2 + \frac{1-\theta}{1+\theta} (1-\theta^{2l}) \sigma_i^2$. The vehicle density is $\eta = 0.01 \text{ vehicle} \mathrm{/m^2}$. As observed in Fig. \ref{approx}, the normal distribution approximation for the clock offset is reasonable, validating our following analysis and result. Fig. \ref{approx} also shows that the accuracy of the approximation will decrease with a smaller $\theta$, which aligns with the bound of distribution difference given by Berry-Esseen theorem. 
\begin{figure}[t]
	\centering	\includegraphics[scale=0.58]{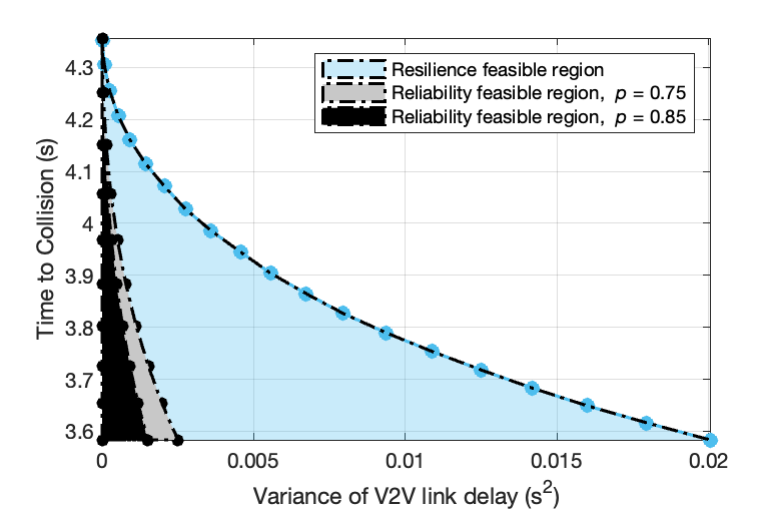}
 \vspace{-0.15cm}
	\caption{\small{Feasible region of resilient design and reliable design of $p=0.75$ and $p = 0.85$ at time slot $l = 10$.}}
    \vspace{-0.5cm}
    \label{epsilon-sigma}
\end{figure}

Next, we compare the feasible region of proposed resilient design with a traditional reliable design when the variance $\sigma_i^2$ of V2V links delay increases. In particular, vehicle $i$ under the resilient design aims at $h^l(\hat{\epsilon}^2) < 0$ by using the diffusion protocol in \eqref{Error}. Meanwhile, we consider a reliable design baseline that seeks to maintain $\mathbb{P} \left( - \hat{\epsilon} < \xi_i^{l} < \hat{\epsilon} \right) \geq p $ by simply modifying its clock after receiving clock information, i.e., $\xi_i^{l} = - u_{i-1,i}^{l-1}$. To show the impact of the clock offset on the platoon, here we transform  $- \hat{\epsilon} < \xi_i^{l} < \hat{\epsilon} $ into $T_i^c(\xi_i^{l}) \geq \hat{t}$ based on \eqref{casettc}. As observed in Fig. \ref{epsilon-sigma}, given a safety threshold $\hat{t}$, both reliable design cases can only work when the variation of V2V transmission delay is smaller than that required in our resilient design. Thus, our resilient design can work in face of large variance of transmission delay. For instance, when setting $T_i^c(\xi_i^{l}) \geq \hat{t} = 3.8$ s as the safety requirement, our resilient design is feasible for a maximum of variance of transmission delay $\sigma_i^2 = 0.009 \text{ s}^2 $, which is nine-fold more than that of the the baseline $p = 0.75$. Meanwhile, when the variance of transmission delay is fixed, our resilient design can satisfy a higher safety threshold compared to both baselines, thus, leaving more reaction time to the manual and autonomous control of vehicles and reducing the rear-end collision risks in the platoon.

\begin{figure}[t]
	\centering
	\includegraphics[scale=0.56]{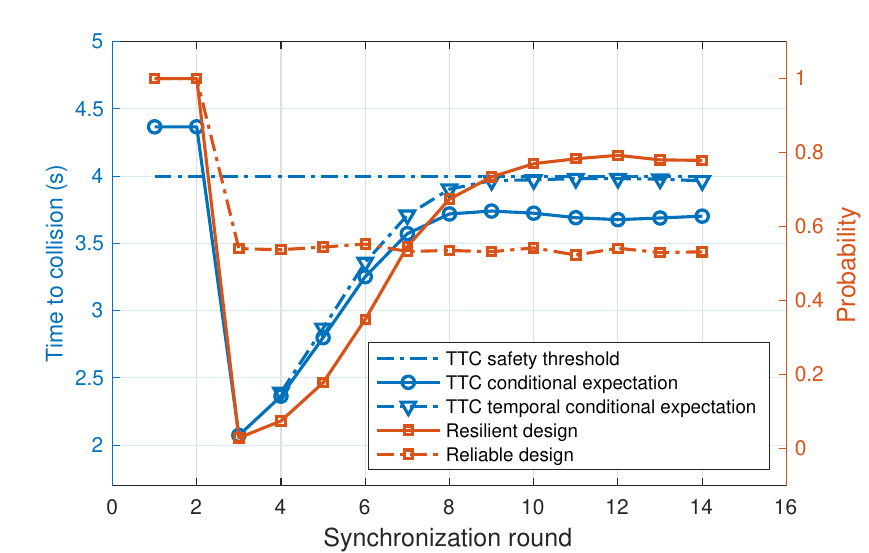}
 \vspace{-0.2cm}
	\caption{\small{ Performance after attack for both resilient and reliable design.}}
 \vspace{-0.4cm}
 \label{5}
\end{figure}
Fig. \ref{5} illustrates the effect of reliable design and resilient design after vehicle $i$ is attacked. Here we consider a dense traffic scenario $\eta = 0.03 \ \text{vehicle} \mathrm{m^2}$ with limited communication resources $B = 12 $ MHz. The diffusion factor is set as $\theta=0.45$ for our scheme and the safety requirements seek to maintain $T_i^c(\xi_i^{l}) \geq \hat{t} = 4$ s \cite{LI2020105676}. We assume the attack happens at $l=2$, and vehicle $i$ identifies it immediately and requests a re-synchronization from vehicle $i-1$. From $l=3$, vehicle $i$ starts to receive clock information from its predecessor. First, we consider the conditional expectation and temporal conditional expectation of TTC given $T_i^c(\xi_i^{l}) \leq \hat{t}$, i.e., $\mathbb{E} \left\{ T_i^c(\xi_i^{l}) \mid T_i^c(\xi_i^{l}) \leq \hat{t} \right\}$ and $\mathbb{E} \left\{ T_i^c(\xi_i^{l+1}) \mid T_i^c(\xi_i^{l}) \leq \hat{t} \right\}$, in our resilient design. As shown in Fig. \ref{5}, after $7$ rounds of recovery, our resilient design can maintain a conditional expectation and a temporal conditional expectation of TTC at $3.7$ s and $4$ s respectively. Specifically, though the conditional expectation of TTC is still $0.3$ s lower than $\hat{t} = 4$ s, our resilient design can prevent two consecutive violations of the TTC safety requirement by ensuring a temporal conditional expectation of $4$ s, i.e., $\mathbb{E} \left\{ T_i^c(\xi_i^{l+1}) \mid T_i^c(\xi_i^{l}) \leq \hat{t} \right\} = \hat{t}$. Moreover, in terms of the reliability $p = \mathbb{P} \left( T_i^c(\xi_i^{l}) \leq \hat{t} \right)$, our resilient design is better after $6$ rounds of recovery and maintain a higher reliability of $p = 0.8$, which is $45\%$ more than $p = 0.55$ in the reliable design. It should be noted that, the reliable design achieves $p = 0.55$ immediately because vehicle $i$ directly modifies its clock according to the received clock information. However, the reliable design can not improve its reliability, and only maintains a low probability to satisfy the TTC requirement in heavy traffic with limited communication resources. In contrast, our resilient design can improve its reliability by the diffusion strategy. This is because the diffusion factor $\theta$ can reduce the variance of clock offset $\xi_i^l$, i.e., $\lim_{l \rightarrow \infty} \sigma_l^2 = \frac{1-\theta}{1+\theta} \sigma_i^2 \leq \sigma_i^2$. Thus, we can tune $\theta$ for a small variance of clock offset while satisfying the condition in \eqref{feasibility}, which ensures the platoon's recovery from synchronization disruption attacks and promises a high reliability.

\section{Conclusion}
In this paper, we have proposed a resilient design against the synchronization disruption attack to vehicles platoon with the assistance of wireless V2V links. We have defined new resilience metric called TCME that uses the temporal correlation in diffusion protocol. We have derived the expression of the resilience metric and analyzed the conditions and recovery time needed in the resilient design. Simulation results validate the theoretical analysis and show that our resilient design can work in conditions of severe variance of V2V link delay in which traditional reliable design is infeasible. Moreover, after recovery from synchronization attack, the system under resilient design is able to respond to violations of clock offset threshold and obtains a higher reliability.

\ifCLASSOPTIONcaptionsoff
  \newpage
\fi



\bibliographystyle{IEEEtran}
\bibliography{bibliography}

\begin{thebibliography}{10}
\providecommand{\url}[1]{#1}
\csname url@samestyle\endcsname
\providecommand{\newblock}{\relax}
\providecommand{\bibinfo}[2]{#2}
\providecommand{\BIBentrySTDinterwordspacing}{\spaceskip=0pt\relax}
\providecommand{\BIBentryALTinterwordstretchfactor}{4}
\providecommand{\BIBentryALTinterwordspacing}{\spaceskip=\fontdimen2\font plus
\BIBentryALTinterwordstretchfactor\fontdimen3\font minus \fontdimen4\font\relax}
\providecommand{\BIBforeignlanguage}[2]{{%
\expandafter\ifx\csname l@#1\endcsname\relax
\typeout{** WARNING: IEEEtran.bst: No hyphenation pattern has been}%
\typeout{** loaded for the language `#1'. Using the pattern for}%
\typeout{** the default language instead.}%
\else
\language=\csname l@#1\endcsname
\fi
#2}}
\providecommand{\BIBdecl}{\relax}
\BIBdecl

\bibitem{8667866}
A.~Rasouli and J.~K. Tsotsos, ``Autonomous vehicles that interact with pedestrians: A survey of theory and practice,'' \emph{IEEE Transactions on Intelligent Transportation Systems}, vol.~21, no.~3, pp. 900--918, Mar. 2020.

\bibitem{9699045}
X.~Ge, Q.-L. Han, J.~Wang, and X.-M. Zhang, ``Scalable and resilient platooning control of cooperative automated vehicles,'' \emph{IEEE Transactions on Vehicular Technology}, vol.~71, no.~4, pp. 3595--3608, Apr. 2022.

\bibitem{7736181}
R.~Hult, G.~R. Campos, E.~Steinmetz, L.~Hammarstrand, P.~Falcone, and H.~Wymeersch, ``Coordination of cooperative autonomous vehicles: Toward safer and more efficient road transportation,'' \emph{IEEE Signal Processing Magazine}, vol.~33, no.~6, pp. 74--84, Nov. 2016.

\bibitem{9834918}
P.~Popovski, F.~Chiariotti, K.~Huang, A.~E. Kalør, M.~Kountouris, N.~Pappas, and B.~Soret, ``A perspective on time toward wireless 6g,'' \emph{Proceedings of the IEEE}, vol. 110, no.~8, pp. 1116--1146, Aug. 2022.

\bibitem{1580935}
S.~Biswas, R.~Tatchikou, and F.~Dion, ``Vehicle-to-vehicle wireless communication protocols for enhancing highway traffic safety,'' \emph{IEEE Communications Magazine}, vol.~44, no.~1, pp. 74--82, Jan. 2006.

\bibitem{s22176679}
R.~S. Rathore, C.~Hewage, O.~Kaiwartya, and J.~Lloret, ``In-vehicle communication cyber security: Challenges and solutions,'' \emph{Sensors}, vol.~22, no.~17, Sep. 2022.

\bibitem{10213228}
G.~Chai, W.~Wu, Q.~Yang, M.~Qin, Y.~Wu, and F.~R. Yu, ``Platoon partition and resource allocation for ultra-reliable v2x networks,'' \emph{IEEE Transactions on Vehicular Technology}, vol.~73, no.~1, pp. 147--161, Jan. 2024.

\bibitem{9046279}
D.~Zhao, H.~Qin, B.~Song, Y.~Zhang, X.~Du, and M.~Guizani, ``{A Reinforcement Learning Method for Joint Mode Selection and Power Adaptation in the V2V Communication Network in 5G},'' \emph{IEEE Transactions on Cognitive Communications and Networking}, vol.~6, no.~2, pp. 452--463, Jun. 2020.

\bibitem{9741813}
G.~Ding, J.~Yuan, G.~Yu, and Y.~Jiang, ``Two-timescale resource management for ultrareliable and low-latency vehicular communications,'' \emph{IEEE Transactions on Communications}, vol.~70, no.~5, pp. 3282--3294, May. 2022.

\bibitem{9013252}
T.~Zeng, O.~Semiari, W.~Saad, and M.~Bennis, ``Dependence control for reliability optimization in vehicular networks,'' in \emph{Proc. IEEE Global Communications Conference (GLOBECOM)}, Waikoloa, HI, USA, Dec. 2019, pp. 1--6.

\bibitem{1566581}
Q.~Li and D.~Rus, ``Global clock synchronization in sensor networks,'' \emph{IEEE Transactions on Computers}, vol.~55, no.~2, pp. 214--226, Feb. 2006.

\bibitem{8778746}
T.~Zeng, O.~Semiari, W.~Saad, and M.~Bennis, ``Joint communication and control for wireless autonomous vehicular platoon systems,'' \emph{IEEE Transactions on Communications}, vol.~67, no.~11, pp. 7907--7922, Nov. 2019.

\bibitem{546270}
H.~Raza and P.~Ioannou, ``Vehicle following control design for automated highway systems,'' \emph{IEEE Control Systems Magazine}, vol.~16, no.~6, pp. 43--60, Dec. 1996.

\bibitem{LI2020105676}
Y.~Li, D.~Wu, J.~Lee, M.~Yang, and Y.~Shi, ``Analysis of the transition condition of rear-end collisions using time-to-collision index and vehicle trajectory data,'' \emph{Accident Analysis \& Prevention}, vol. 144, p. 105676, Sep. 2020.

\bibitem{bryson1974heavy}
M.~C. Bryson, ``Heavy-tailed distributions: properties and tests,'' \emph{Technometrics}, vol.~16, no.~1, pp. 61--68, Feb. 1974.

\bibitem{berry1941accuracy}
A.~C. Berry, ``The accuracy of the gaussian approximation to the sum of independent variates,'' \emph{Transactions of the american mathematical society}, vol.~49, no.~1, pp. 122--136, Jan. 1941.

\bibitem{itō1974diffusion}
K.~It{\=o} and H.~McKean, \emph{Diffusion Processes and Their Sample Paths}, ser. Die Grundlehren der mathematischen Wissenschaften in Einzeldarstellungen.\hskip 1em plus 0.5em minus 0.4em\relax Springer-Verlag, 1974.

\end{thebibliography}
%



%











\end{document}